\documentclass[a4paper,russian]{article}
\NeedsTeXFormat{LaTeX2e}
\usepackage[cp1251]{inputenc}
\usepackage[T2A]{fontenc}
\usepackage[english]{babel}
\usepackage[dvips]{graphicx}

\usepackage{amsmath,amssymb,amsthm}
\usepackage{hyperref}

\usepackage[top=20mm,right=20mm,left=20mm,bottom=20mm]{geometry}

\def\Ellips{{E}}
\def\B{\mathfrak{W}}
\def\C{\mathfrak{S}}
\def\ds{\displaystyle}
\def\bbR{\mathbb{R}}
\def\rang{\mathop{\rm rang}\nolimits}
\def\gs{\geqslant}
\def\ls{\leqslant}

\theoremstyle{plain}
\newtheorem{propos}{Proposition}

\begin{document}


\title{Integral manifolds of the reduced system\\ in the problem of inertial motion\\ of a rigid body about a fixed point\footnote{Submitted on January 9, 1974.}}

\author{M.P.\,Kharlamov\footnote{Moscow State University.}}

\date{}

\maketitle

\begin{center}
{\bf \textit{Mekh. Tverd. Tela} (Russian Journal ``Mechanics of Rigid Body''),\\
1976, No. 8, pp. 18--23}

\vspace{5mm}

\href{http://www.ics.org.ru/doc?pdf=1160\&dir=r}{http://www.ics.org.ru}

\vspace{2mm}

\href{https://www.researchgate.net/publication/258728370\_Integral\_manifolds\_of\_the\_reduced\_system\_in\_the\_problem\_of\_the\_inertial\_motion\_of\_a\_rigid\_body\_with\_a\_fixed\_point?ev=prf\_pub}{https://www.researchgate.net}

\end{center}

\vspace{3mm}

Let $A$, $B$ and $C$ $(A<B<C)$ be the principal moments of inertia of a rigid body having a fixed point. We consider the ellipsoid $\Ellips^2$ defined in $\bbR^3$ by the equation $Ax^2+By^2+Cz^2=1$ with the following metric on its surface
$$
d\Sigma=\sqrt{hABC}\left(A^2x^2+B^2y^2+C^2z^2\right)^{-\frac{1}{2}}d\sigma.
$$
Here $h>0$ is a constant, $d\sigma$ is the metric on $\Ellips^2$ induced by the scalar product of $\bbR^3$. The reduced system \cite{Kh1976} of the problem of the motion of a rigid body with a fixed point without external forces is equivalent to the Hamiltonian system on $T^*\Ellips^2$ with the Hamilton function
\begin{equation}\label{neq1}
  \ds{H=\frac{A^2x^2+B^2y^2+C^2z^2}{2ABC}(p_x^2+p_y^2+p_z^2)}.
\end{equation}
The projections to $\Ellips^2$ of the integral curves of this system with constant energy $H=h$ are, according to the Maupertuis principle, geodesics of the metric $d\Sigma$. Thus, instead of saying ``the basic integral curve of the Hamiltonian vector field $X_H$'' we use the term ``geodesic''.

\begin{propos} Any integral manifold $J_h=\{H=h\}$ in $T^*\Ellips^2$ is diffeomorphic to $T_1^*S^2$, i.e., to the bundle of the unit cotangent vectors to the sphere.
\end{propos}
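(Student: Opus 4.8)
The plan is to show that $J_h$ is, as a smooth manifold, the unit cotangent sphere bundle of $\Ellips^2$ with respect to the metric $d\Sigma$ (in accordance with the Maupertuis remark made just above the Proposition), and then to use the fact that the unit cotangent bundle of a surface depends, up to diffeomorphism, only on the underlying smooth surface and not on the Riemannian metric; since $\Ellips^2$ is diffeomorphic to $S^2$, this gives $J_h\cong T_1^*S^2$.

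First I would fix a concrete model of $T^*\Ellips^2$. Writing points of $\Ellips^2$ as $(x,y,z)$ with $Ax^2+By^2+Cz^2=1$ and representing a covector at such a point by the vector $p=(p_x,p_y,p_z)\in\bbR^3$ that is $d\sigma$-dual to it, i.e. orthogonal to the Euclidean normal, $Axp_x+Byp_y+Czp_z=0$, realizes $T^*\Ellips^2$ as a submanifold of $\Ellips^2\times\bbR^3$ on which $p_x^2+p_y^2+p_z^2$ is the squared $d\sigma$-length of $p$. The condition $H=h$ then reads $(A^2x^2+B^2y^2+C^2z^2)(p_x^2+p_y^2+p_z^2)=2hABC$. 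On the compact surface $\Ellips^2$ the function $A^2x^2+B^2y^2+C^2z^2$ is smooth and, since $A\ls A^2x^2+B^2y^2+C^2z^2\ls C$ there, bounded away from $0$; hence $J_h$ is exactly the set of $(x,y,z,p)$ with $p$ in the normal plane and $|p|^2$ equal to the smooth positive function $2hABC/(A^2x^2+B^2y^2+C^2z^2)$. Over each base point this locus is a round circle, so the projection $J_h\to\Ellips^2$ is a locally trivial $S^1$-bundle with closed total space.

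Next I would identify this circle bundle intrinsically. Rescaling the cometric by the conformal factor of $d\Sigma$ shows that $H=h\,H_{d\Sigma}$, where $H_{d\Sigma}=\tfrac12\,|p|^2_{d\Sigma}$ is the kinetic-energy Hamiltonian of $d\Sigma$; equivalently $J_h=\{H_{d\Sigma}=1\}$, which, after the fibrewise radial rescaling $p\mapsto p/|p|_{d\Sigma}$ (a diffeomorphism onto the unit level), is precisely $T_1^*(\Ellips^2,d\Sigma)$. It then remains to invoke two elementary facts: for any two metrics $g_0,g_1$ on a manifold $M$ the map $p\mapsto p/|p|_{g_1}$ is a diffeomorphism $T_1^*(M,g_0)\to T_1^*(M,g_1)$; and a diffeomorphism $\varphi\colon M\to N$ induces a fibrewise-linear bundle isomorphism $T^*M\to T^*N$ taking $T_1^*(M,\varphi^*g)$ onto $T_1^*(N,g)$. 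Applying the second with a diffeomorphism $\Ellips^2\to S^2$ and then the first gives $J_h\cong T_1^*(\Ellips^2,d\Sigma)\cong T_1^*S^2$.

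All the computations here are routine; the only step requiring care is the middle one — setting up the model of $T^*\Ellips^2$ correctly and verifying that under it $H$ is indeed $h\cdot H_{d\Sigma}$, so that the algebraically obvious circle bundle is the metrically meaningful one. If one prefers to bypass the metric-independence argument, the same conclusion is reached through the classification of oriented $S^1$-bundles over $S^2$ by the Euler number: $J_h$ is (up to fibrewise rescaling) the sphere bundle of the oriented rank-two bundle $T^*\Ellips^2\cong\varphi^*(T^*S^2)$, whose Euler number equals that of $T^*S^2$, so the two sphere bundles are isomorphic and their total spaces diffeomorphic (each being $\mathbb{RP}^3$).
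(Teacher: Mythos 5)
Your proposal is correct and follows essentially the paper's own route: the paper implements your two general lemmas concretely, transporting $J_h$ into $T^*S^2$ by the cotangent lift of the diffeomorphism $D(x,y,z)=(\sqrt{A}x,\sqrt{B}y,\sqrt{C}z)$ and then normalizing momenta fibrewise via the map $\vartheta$, which is exactly your ``base diffeomorphism plus radial rescaling'' step applied to the level set viewed as a (rescaled) unit cotangent bundle. The Euler-number alternative you mention (both total spaces being $\mathbb{R}P^3$) is a valid different route, but it is not the one the paper takes.
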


\begin{proof} Let $D$ be the diffeomorphism of $\Ellips^2$ onto $S^2=\{\xi^2+\eta^2+\zeta^2=1\}$ such that $D(x,y,z)=(\sqrt{A}x,\sqrt{B}y,\sqrt{C}z)$. The corresponding diffeomorphism of the cotangent bundles $T^*D:T^*S^2\to T^*\Ellips^2$ has the form $x=\xi/\sqrt{A}$, $y=\eta/\sqrt{B}$, $z=\zeta/\sqrt{C}$, $p_x=\sqrt{A}p_\xi$, $p_y=\sqrt{B}p_\eta$, $p_z=\sqrt{C}p_\zeta$. In $\bbR^6$, the manifold ${M^3=}(T^*D)^{-1}(J_h)\subset T^*S^2$ is given by the equations
\begin{equation*}
\begin{array}{c}
  \ds{\frac{A^2x^2+B^2y^2+C^2z^2}{2h}(Ap_\xi^2+Bp_\eta^2+p_\zeta^2)} = 1, \quad
  \xi^2+\eta^2+\zeta^2=1, \quad
  \xi p_\xi + \eta p_\eta + \zeta p_\zeta = 0.
\end{array}
\end{equation*}
In turn, $T_1^*S^2$ is given by the equations
\begin{equation*}
\begin{array}{c}
  p_\xi^2+p_\eta^2+p_\zeta^2=1, \quad
  \xi^2+\eta^2+\zeta^2=1, \quad
  \xi p_\xi + \eta p_\eta + \zeta p_\zeta = 0.
\end{array}
\end{equation*}
Let us define $\vartheta:M^3\to T_1^*S^2$ by putting $\vartheta(\xi,\eta,\zeta,p_\xi,p_\eta,p_\zeta)=(\xi,\eta,\zeta,p_\xi/\delta,p_\eta/\delta,p_\zeta/\delta)$, where $\delta^2=p_\xi^2+p_\eta^2+p_\zeta^2$. The map $\vartheta$ is bijective and $\rang\vartheta = 3$ at each point of $M^3$. Therefore $\vartheta$ is a diffeomorphism. The composition $\alpha = \vartheta \circ T^*D^{-1}$ takes $J_h$ to $T_1^*S$. This proves the statement.
\end{proof}

Let us point out one more property of manifolds $J_h$. Let $Q$ be the closed ball in $\bbR^3$ of radius $\pi$ with the center at the coordinates origin. Declare the diametrically opposite points of the ball boundary equivalent and denote by $P$ the quotient space of the topological space $Q$ with respect to this equivalence. For each $\nu\in P$, we denote by $v_\nu \in SO(3)$ the element for which $\nu$ is the defining vector (see \cite{Kh1976}). Let $\omega_0\in T_1^*S^2$ have the coordinates $\xi=1$, $\eta=\zeta=0$, $p_\eta=1$, $p_\xi=p_\zeta=0$. The map $\beta:P\to J_h$ defined as $\beta(\nu)=(v_\nu \circ \alpha)^{-1}(\omega_0)$ is a homeomorphism. We use the map $\beta$ for a geometric interpretation.

Let $\lambda$, $\mu$ be the elliptic coordinates on $\Ellips^2$
\begin{equation*}
  \ds x^2=a\frac{(a-\lambda)(a-\mu)}{(a-b)(a-c)},\quad y^2=b\frac{(\lambda-b)(b-\mu)}{(a-b)(b-c)}, \quad z^2=c\frac{(\lambda-c)(\mu-c)}{(a-c)(b-c)} ,
\end{equation*}
where $a=1/A$, $b=1/B$, $c=1/C$. The elliptic coordinates change in the regions $a \gs \lambda \gs b \gs \mu \gs c$. Denote $F(t)=(a-t)(b-t)(c-t)/t$. The Hamilton function \eqref{neq1} takes the form
\begin{equation*}
  \ds{H=2\frac{\lambda\mu}{\lambda-\mu}|F(\lambda)p_\lambda^2-F(\mu)p_\mu^2|.}
\end{equation*}
Let us introduce on $\Ellips^2$ the Liouville coordinates by the formulas
\begin{equation*}
  \ds u=\int\limits_b^\lambda\frac{dt}{\sqrt{F(t)}},\qquad v=\int\limits_c^\mu\frac{dt}{\sqrt{-F(t)}}.
\end{equation*}
For them, the regions are
\begin{equation*}
  \ds 0\ls u\ls m = \int\limits_b^a\frac{dt}{\sqrt{F(t)}}, \qquad 0\ls v\ls n=\int\limits_c^b\frac{dt}{\sqrt{-F(t)}}.
\end{equation*}
In Fig.~\ref{fig_01}, we show parametric curves of $u$ and $v$ on the ellipsoid. In the coordinates $(u, v)$,
$$
H=2[V(v)-U(u))]^{-1}(p_u^2+p_v^2),
$$
where $U(u)=1/\lambda(u)$, $V(v)=1/\mu(v)$. Note that $dU/du=-\lambda^{-2}\sqrt{F(\lambda)}$, i.e., $dU/du=0$ at $u=0$, $u=m$ and $dU/du<0$ at $0<u<m$. Similarly, $dV/dv=0$ at $v=0$, $v=n$ and $dV/dv<0$ at $0<v<n$.

\begin{figure}[!ht]
\centering
\includegraphics[width=0.35\paperwidth,keepaspectratio]{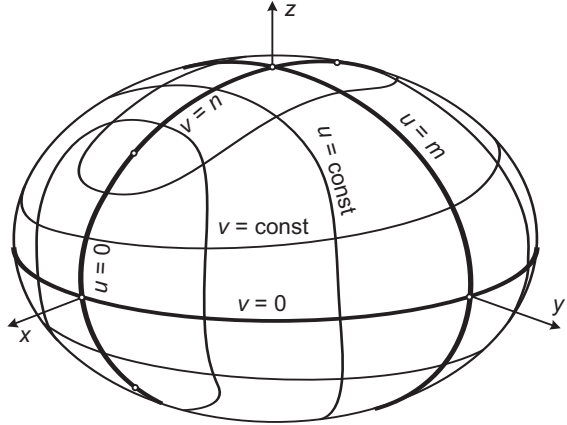}
\caption{Coordinates on the ellipsoid.}\label{fig_01}
\end{figure}

In the domain where $u$ and $v$ are local coordinates the restriction of the initial system to the manifold $J_h$ admits the integrals
\begin{equation}\label{neq2}
  p_u^2+hU(u)=h\kappa, \qquad p_v^2-hV(v)=-h\kappa.
\end{equation}
Denote by $J_{h, \kappa}$ the subset of $J_h$ defined by equations \eqref{neq2}. The admissible values of $\kappa$ are $A \ls \kappa \ls C$. Let us find out the topological type of the integral manifolds $J_{h, \kappa}$ in the following cases: 1)~$A \ls \kappa < B$; 2)~$B<\kappa \ls C$; 3)~$\kappa = B$.

Let $\B =\Ellips^2\setminus\{u=0\}$ and $\C =\Ellips^2\setminus\{v=n\}$ be the regions on the ellipsoid surface. In them, we introduce the local coordinates $\B =\{(w,\varphi\,{\rm mod}\,4n)\}$, $\C =\{(s,\theta\,{\rm mod}\, 4m)\}$ similar to cylindrical ones putting
\begin{equation*}
\begin{array}{cc}
  w=\begin{cases}
  u  &\text{при $x\ls 0$;}\\
  2m-u  &\text{при $x\gs 0$,}\\
  \end{cases}
  &
  s=\begin{cases}
  v  &\text{при $z\ls 0$;}\\
  -v  &\text{при $z\gs 0$,}\\
  \end{cases}
\end{array}
\end{equation*}

\begin{equation*}
\begin{array}{cc}
  \varphi=\begin{cases}
  v &\text{при $y\gs 0, z \gs 0$;}\\
  2n-v &\text{при $y\ls 0, z \gs 0$;}\\
  2n+v &\text{при $y\ls 0, z \ls 0$;}\\
  4n-v &\text{при $y\gs 0, z \ls 0$,}\\
  \end{cases}
  &
  \theta=\begin{cases}
  u &\text{при $x\gs 0, y \gs 0$;}\\
  2m-u &\text{при $x\ls 0, y \gs 0$;}\\
  2m+u &\text{при $x\ls 0, y \ls 0$;}\\
  4m-u &\text{при $x\gs 0, y \ls 0$.}\\
  \end{cases}
\end{array}
\end{equation*}
It is easily shown that these coordinates are compatible with the smooth structure of the ellipsoid.

Let us consider the cases 1 -- 3.

If $A\ls\kappa < B$, then the motion takes place in the region $\B $ and the equations admit the first integrals
\begin{equation}\label{neq3}
  H_w=p_w^2+hW(w)=h\kappa,\qquad H_\varphi=p_\varphi^2-h\Phi(\varphi)=-h\kappa,
\end{equation}
where $W(w)=U(u(w))$, $\Phi(\varphi)=V(v(\varphi))$. The qualitative picture of the functions $W$ and $\Phi$ is shown in Fig.~\ref{fig_02}. In Fig.~\ref{fig_03}, we show the phase portraits of  one-dimensional systems corresponding to the Hamilton functions $H_w$ and $H_\varphi$. Each manifold $J_{h, \kappa}$ is the product of level lines of the functions $H_w$ and $H_\varphi$ defined by \eqref{neq3}. Thus, $J_{h,A}$ is two non-intersecting circles (they correspond to the cross section of the ellipsoid by the plane $x=0$ with two different directions of motion). If $A<\kappa<B$, then $J_{h,\kappa}$ consists of two two-dimensional tori each of which concentrically envelopes one of the circles out of $J_{h,A}$.

\begin{figure}[!ht]
\centering
\includegraphics[width=0.6\paperwidth,keepaspectratio]{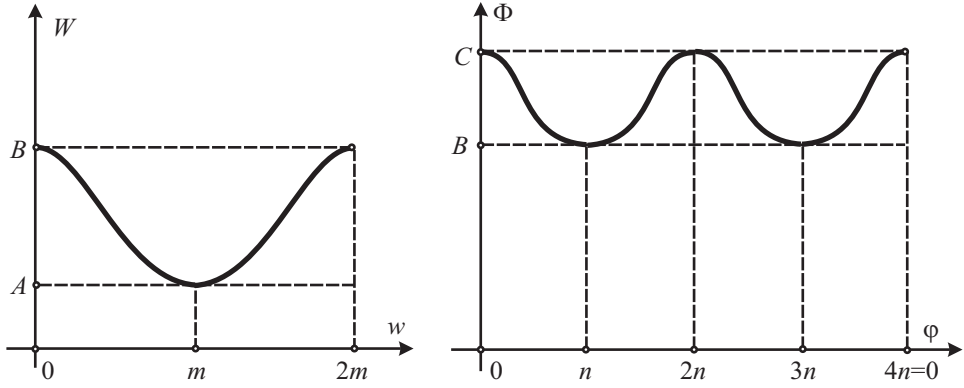}
\caption{The functions $W$ and $\Phi$.}\label{fig_02}
\end{figure}

\begin{figure}[!ht]
\centering
\includegraphics[width=0.6\paperwidth,keepaspectratio]{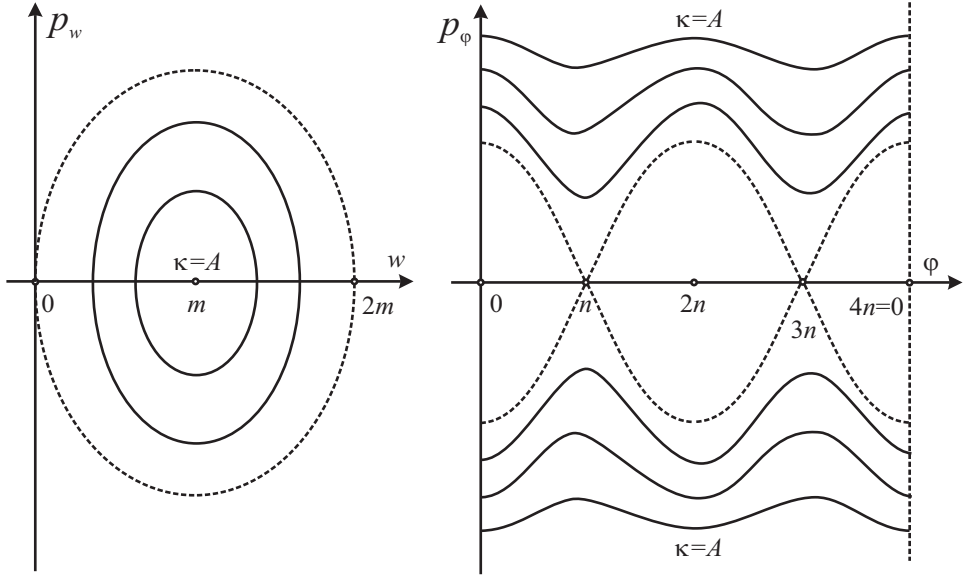}
\caption{The portraits of one-dimensional systems.}\label{fig_03}
\end{figure}

In the case $B<\kappa\ls C$ the motion takes place in the region $\C $. In $T^*\C  \cap J_h$ the integrals are defined
$$
H_s=p_s^2-hS(s)=-h\kappa, \qquad H_\theta=p_\theta^2+h\Theta(\theta)=h\kappa,
$$
where $S(s)=V(v(s))$, $\Theta(\theta)=U(u(\theta))$; the system splits into two one-dimensional ones. The manifold $J_{h,C}$ consists of two non-intersecting circles and $J_{h,\kappa}$ for $B<\kappa<C$ consists of two two-dimensional tori each of which concentrically envelopes one of the circles out of $J_{h,C}$.

In Fig.~\ref{fig_04}, where the diametrically opposite points of the ball boundary are identified, we show the sets corresponding to the manifolds $J_{h,A}$ and $J_{h,C}$ under the homeomorphism $\beta:P\to J_h$. The union of the circles {\it 1} and {\it 2} is the set $\beta^{-1}(J_{h,C})$. The set $\beta^{-1}(J_{h,A})$ consists of the circles {\it 3} and~{\it 4}.

\begin{figure}[!ht]
\centering
\includegraphics[width=0.3\paperwidth,keepaspectratio]{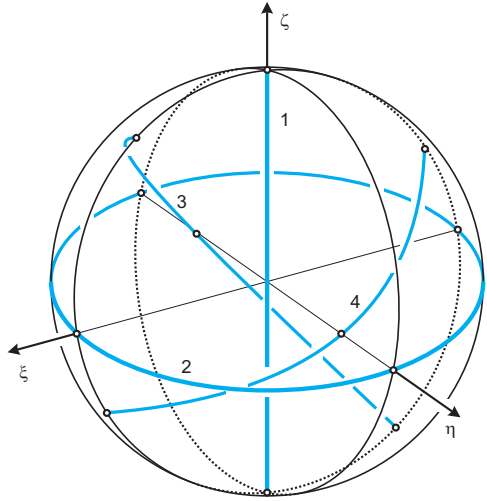}
\caption{The integral circles.}\label{fig_04}
\end{figure}

\begin{figure}[!ht]
\centering
\includegraphics[width=0.3\paperwidth,keepaspectratio]{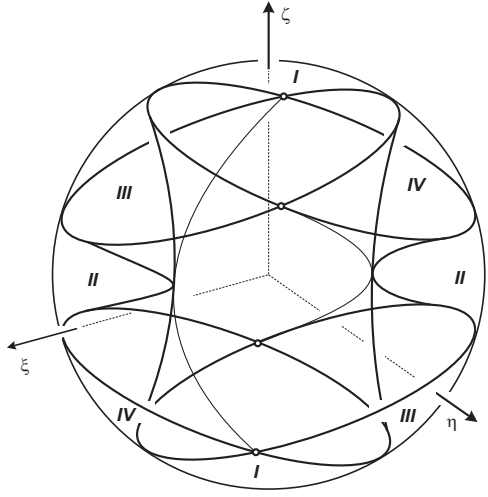}
\caption{The tori regions.}\label{fig_05}
\end{figure}

Now let us consider the case $\kappa = B$. We denote by $K_1$, $K_2$, $K_3$, and $K_4$ the umbilical points $(u=0, v=n)$ on the ellipsoid surface lying respectively in the regions $\{x>0, z>0\}$, $\{x<0, z>0\}$, $\{x<0, z<0\}$, and $\{x>0, z<0\}$.

\begin{propos}\label{propos2}
The cross section of the ellipsoid by the plane $y=0$ is a closed geodesic of the metric $d\Sigma$. All geodesics starting from an umbilical point at $t=0$ meet simultaneously at the opposite umbilical point.
\end{propos}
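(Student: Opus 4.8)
My plan is to treat the two assertions of Proposition~\ref{propos2} separately, working throughout in the elliptic and Liouville coordinates already set up.

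\textbf{The section $y=0$.} First I would observe that the reflection $\sigma\colon(x,y,z)\mapsto(x,-y,z)$ preserves $\Ellips^2$ and leaves both $A^2x^2+B^2y^2+C^2z^2$ and the induced metric $d\sigma$ unchanged; hence $\sigma$ is an isometry of $d\Sigma$. Its fixed‑point set is the ellipse $\Gamma=\{y=0\}\cap\Ellips^2$, a connected closed curve. The $d\Sigma$‑geodesic leaving a point of $\Gamma$ tangentially to $\Gamma$ is carried by $\sigma$ to a geodesic with the same initial point and velocity, so it coincides with it; therefore it stays in $\operatorname{Fix}\sigma=\Gamma$ and, being non‑constant, sweeps out all of $\Gamma$. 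Thus $\Gamma$ is a closed geodesic. (Equivalently, $\Gamma=\{\lambda=b\}\cup\{\mu=b\}$, and since $dU/du=0$ at $u=0$ and $dV/dv=0$ at $v=n$, the sets $\{u=0,\ p_u=0\}$ and $\{v=n,\ p_v=0\}$ are invariant under $X_H$.)

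\textbf{The umbilical geodesics.} Let $\gamma$ be a geodesic with $\gamma(0)=K_1$ (the other umbilics are handled by the symmetries $x\mapsto-x$, $z\mapsto-z$). For small $t>0$ the point $\gamma(t)$ lies in a chart where the separation constant $\kappa$ of \eqref{neq2} is defined, and since $p_u^2=h(\kappa-U(u))\gs0$, $p_v^2=h(V(v)-\kappa)\gs0$ while $U(u),V(v)\to B$ as $\gamma(t)\to K_1$, it follows that $\kappa=B$; thus along $\gamma$
$$p_u^2=h\bigl(B-U(u)\bigr),\qquad p_v^2=h\bigl(V(v)-B\bigr),$$
with $U$ decreasing from $B$ to $A$ on $[0,m]$ and $V$ decreasing from $C$ to $B$ on $[0,n]$. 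If $\gamma\ne\Gamma$ it lies in $\B$ for small $t>0$, and there \eqref{neq3} holds with $\kappa=B$, so the $(w,p_w)$‑ and $(\varphi,p_\varphi)$‑projections of the orbit lie on the \emph{critical} levels $H_w=hB$, $H_\varphi=-hB$. Because $dU/du$ and $dV/dv$ vanish at the umbilical values, these are exactly the separatrices through the degenerate equilibria sitting over the four umbilics, and along such a separatrix $w$ (resp.\ $\varphi$) is monotone and joins two of those equilibria. Reading this off: $\gamma$ issues from $K_1$, has $w$ decrease from $2m$ to $0$ and $\varphi$ pass from $n$ to $3n$, so $u$ makes a single excursion between $0$ and $m$ and $v$ a single one between $n$ and $0$ — crossing the planes $x=0$ and $z=0$ once each and staying on one side of $\{y=0\}$ — and the orbit terminates at the umbilic $K_3$ opposite to $K_1$. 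The exceptional geodesic $\Gamma$ is the limiting case and likewise passes through $K_3$.

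\textbf{Simultaneity.} I would then compute the transit time. From $\dot u=4p_u/(V-U)$, the identity $V-U=(p_u^2+p_v^2)/h$ (a consequence of the two relations above), and $du/p_u=dv/p_v$ along the flow,
$$dt=\frac{V-U}{4p_u}\,du=\frac1{4h}\Bigl(p_u+\frac{p_v^2}{p_u}\Bigr)du=\frac1{4h}\bigl(p_u\,du+p_v\,dv\bigr).$$
Along the flow $p_u\,du=\dfrac{4p_u^2}{V-U}\,dt\gs0$ and $p_v\,dv\gs0$; along $\gamma$ the coordinate $u$ runs from $0$ up to $m$ and back to $0$ and $v$ from $n$ down to $0$ and back to $n$, so $\int_\gamma p_u\,du=2\sqrt h\int_0^m\sqrt{B-U(u)}\,du$ and $\int_\gamma p_v\,dv=2\sqrt h\int_0^n\sqrt{V(v)-B}\,dv$. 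Hence the time from $K_1$ to the opposite umbilic is
$$t_\ast=\frac1{2\sqrt h}\Bigl(\int_0^m\sqrt{B-U(u)}\,du+\int_0^n\sqrt{V(v)-B}\,dv\Bigr),$$
which is a proper (hence finite) integral that does not depend on the particular geodesic $\gamma$; this is precisely the asserted simultaneity.

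\textbf{Where the difficulty lies.} The step I expect to be the real work is the second one — establishing that the $\kappa=B$ orbit closes up at the \emph{opposite} umbilic rather than at $K_1$ or an adjacent umbilic, and dealing with the behaviour at the umbilics and vertices where the elliptic coordinates degenerate: in the $(w,\varphi,p_w,p_\varphi)$‑chart the umbilic shows up as a common equilibrium of the two one‑dimensional systems, whereas $J_h$ is smooth there by Proposition~1 and the geodesic passes through the corresponding point of $J_h$ in finite time. Once the excursion pattern of $u$ and $v$ along $\gamma$ is pinned down, the time formula and its independence of $\gamma$ follow immediately.
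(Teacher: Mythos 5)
Your argument is correct, and two of its three components take a genuinely different route from the paper. For the closed geodesic $y=0$ the paper does not invoke the reflection isometry at all: it passes to the chart $(w,\varphi)$, rescales time by $d\tau=[\Phi(\varphi)-W(w)]^{-1}dt$, and exhibits the explicit invariant solutions $w\equiv 0$, $w\equiv 2m$, $\varphi\equiv n$, $\varphi\equiv 3n$ of the resulting equations \eqref{neq4}; your fixed-point-set argument is shorter and coordinate-free, while the paper's version produces exactly the separatrix picture it needs next. Your middle step (a geodesic leaving an umbilic has $\kappa=B$, hence projects onto the critical levels of $H_w$, $H_\varphi$, so $w$ and $\varphi$ are monotone and the orbit runs to the opposite umbilic) is essentially the paper's argument, phrased via the phase portraits of Fig.~\ref{fig_03}. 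The real divergence is the simultaneity: the paper synchronizes two geodesics from $K_3$ in the reduced time, writes their transit times as the improper integrals \eqref{neq5}--\eqref{neq6} over $-\infty<\tau<+\infty$ (convergence being asserted from smoothness of $d\Sigma$), and proves $t_1=t_2$ by a shift-plus-evenness cancellation based on \eqref{neq7}; you instead derive $dt=\tfrac{1}{4h}(p_u\,du+p_v\,dv)$ and integrate over the fixed excursion pattern of $u$ and $v$, obtaining an explicit common value $t_\ast$ as a manifestly finite proper integral. Your route buys more (a closed formula for the transit time, finiteness for free, and --- as one checks easily --- the same value $t_\ast$ for the limiting geodesic along $y=0$, which both you and the paper otherwise treat only in passing), at the cost of having to justify the excursion count, which your monotonicity analysis does supply. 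Two small blemishes, neither fatal: the constant in your $t_\ast$ inherits the factor-of-two mismatch between the displayed Liouville-form Hamiltonian and equations \eqref{neq2} (an inconsistency already present in the paper, and irrelevant to the independence claim), and the passage of the geodesic through the umbilic in finite real time, which you flag as ``where the difficulty lies,'' is in fact settled by your own formula, so it deserves to be stated as a conclusion rather than a worry.
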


\begin{proof} Let us use the coordinates $(w, \varphi)$. Introducing the ``reduced time'' $\tau$ by the formula $d\tau=[\Phi(\varphi)-W(w)]^{-1}dt$ and using equations \eqref{neq3} with $\kappa=B$, we get the equations of geodesics in the form
\begin{equation}\label{neq4}
  \ds{\frac{dw}{d\tau}=\pm\sqrt{h(B-W(w))}, \qquad \frac{d\varphi}{d\tau}=\pm\sqrt{h(\Phi(\varphi)-B)}.}
\end{equation}
Denote
$$
\ds{F(w,w_0)=\int\limits_{w_0}^w\frac{dw}{\sqrt{h(B-W(w)})}}, \quad  \ds{G(\varphi,\varphi_0)=\int\limits_{\varphi_0}^\varphi\frac{d\varphi}{\sqrt{h(\Phi(\varphi)-B)}}}.
$$
Let $w=f(\tau, w_0)$ and $\varphi=g(\tau, w_0)$ be the inverse for the dependencies $\tau=F(w,w_0)$ and $\tau=G(\varphi,\varphi_0)$ respectively. Equations \eqref{neq4} admit the solutions
$$
\begin{array}{ll}
\bigl(w\equiv 0,\varphi=g(\pm\tau,\varphi_0)\bigr), & \bigl(w\equiv 2m,\varphi=g(\pm\tau,\varphi_0)\bigr), \\ \bigl(w=f(\pm\tau,w_0),\varphi\equiv n\bigr), & \bigl(w=f(\pm\tau,w_0),\varphi\equiv 3n\bigr).
\end{array}
$$
This proves the first statement.

Consider an arbitrary trajectory of equations \eqref{neq4} starting at a point $\{w_0,\varphi_0\}$ not belonging to the cross section $y=0$. Let, for definition, this point lie in the first octant, i.e., $m<w_0<2m$, $0<\varphi_0<n$. The initial velocity may have four directions according to the choice of the signs in \eqref{neq4}. Suppose, for example, that $\left. dw/d\tau \right|_{\tau=0}>0$, $\left. d\varphi/d\tau \right|_{\tau=0}>0$. Then (see Fig.~\ref{fig_03}) as $\tau \to +\infty$, the coordinates $w$ and $\varphi$ monotonously increase and $w \to 2m$, $\varphi\to n({\rm mod}\, 4n)$. As $\tau \to -\infty$ we have monotonous decreasing $w \to 0$ and $\varphi\to -n({\rm mod}\, 4n)$.

Therefore the chosen trajectory of \eqref{neq4} asymptotically approaches $K_1$ as $\tau \to +\infty$ and $K_3$ as $\tau \to -\infty$. Another possible cases of the inial directions are considered analogously.

So, since the geodesics starting at an umbilical point can correspond only to the value $\kappa = B$, each such geodesic meets the cross section $y=0$ for the first time at the opposite umbilical point.

Let $\gamma_1(t)$ and $\gamma_2(t)$ be two geodesics such that $\gamma_1(0) = \gamma_2(0) = K_3$. Suppose that some time value $t=t_0>0$ corresponds to the value $\tau=0$ of the ``reduced time''. Let $\gamma_1(t_0)=(w_1, \varphi_1)$, $\gamma_2(t_0)=(w_2, \varphi_2)$. Then the dependency of $\gamma_1$ on the ``reduced time'' is $w=f(\tau, w_1)$, $\varphi=g(\tau, \varphi_1)$, and the equations of $\gamma_2$ are $w=f(\tau, w_2)$, $\varphi=g(\tau, \varphi_2)$. Denote by $t_1$ and $t_2$ the minimal positive values of $t$ for which $\gamma_1(t_1) = \gamma_2(t_2) = K_1$. Then
\begin{equation}\label{neq5}
  t_1=\int\limits_{-\infty}^{+\infty}[\Phi(g(\tau, \varphi_1))-W(f(\tau, \varphi_1))]d\tau,
\end{equation}
\begin{equation}\label{neq6}
  t_2=\int\limits_{-\infty}^{+\infty}[\Phi(g(\tau, \varphi_2))-W(f(\tau, \varphi_2))]d\tau.
\end{equation}
The integrals in \eqref{neq5} and \eqref{neq6} converge since the metric $d\Sigma$ does not have singularities.

Let us show that $t_1=t_2$. For this purpose we use the obvious relations
\begin{equation}\label{neq7}
  f(\tau,w_1)=f(\tau-F(w_1,w_2),w_2),\qquad g(\tau,w_1)=g(\tau-G(w_1,w_2),w_2)
\end{equation}
and the following almost obvious statement. Suppose that for a function $\psi(\tau)$ $(-\infty<\tau<+\infty)$ there exists such a point $\tau_0$ that $\chi(\tau)=\psi(\tau+\tau_0)$ is an even function. If the integral
$$
\int\limits_{-\infty}^{+\infty}[\psi(\tau)-\psi(\tau+k)]d\tau,
$$
with some constant $k$ converges, then it equals zero. Using \eqref{neq7}, we transform \eqref{neq5} as follows
\begin{equation*}
  t_1=\int\limits_{-\infty}^{+\infty}[\Phi(g(\tau,\varphi_2))-W(f(\tau+G(\varphi_2,\varphi_1))-F(w_2,w_1),w_2)]d\tau.
\end{equation*}
Then we subtract the equality \eqref{neq6}:
$$
t_1-t_2=\int\limits_{-\infty}^{+\infty}[W(f(\tau,w_2))-W(f(\tau+k,w_2))]d\tau.
$$
Here
$$
k=G(\varphi_2, \varphi_1)-F(w_2,w_1)
$$
does not depend on $\tau$.

It is easy to check that $W(f(\tau,w_2))$ as a function of $\tau$ satisfies the condition of the just formulated statement. For this, it is sufficient to choose $\tau_0$ in such a way that $f(\tau_0,w_2)=m$. Consequently, $t_1=t_2$. The proposition is proved.
\end{proof}

Let us now describe the type of the set $J_{h,B}$. The curves $O_i=J_h \cap T_K^*\Ellips^2$ $(i=1,2,3,4)$ are topological circles. According to Proposition~\ref{propos2}, all trajectories starting at $O_1$ simultaneously cross $O_3$ and simultaneously return to $O_1$. Therefore this family of trajectories fills a closed flow tube, i.e., they fill a two-dimensional torus $T_1$ in $J_h$. In the same way the family of geodesics crossing $K_2$ and $K_4$ fills a two-dimensional torus $T_2$ in $J_h$. The tori $T_1$ and $T_2$ intersect by two circles corresponding to the cross section of the ellipsoid by the plane $y=0$ with two deffirent directions of motion.

In Fig.~\ref{fig_05}, we show how the set $\beta^{-1}(J_{h,B})$ is embedded in $P$ (the diametrically opposite points of the ball boundary are identified). The regions $I-IV$ are filled with the one-parameter families of the integral tori enveloping concentrically the circles {\it 1~--~4} respectively (see Fig.~\ref{fig_04}).

\end{document}